\documentclass[10pt]{article}

\usepackage[margin=1in]{geometry}
\usepackage{amsmath,amssymb,amsthm}
\usepackage{booktabs,array}
\usepackage{microtype}
\usepackage{tikz}
\usetikzlibrary{arrows.meta,positioning}
\usepackage[hidelinks]{hyperref}
\usepackage{xurl}
\newtheorem{theorem}{Theorem}
\newtheorem{proposition}[theorem]{Proposition}
\newtheorem{lemma}[theorem]{Lemma}
\theoremstyle{remark}

\newcommand{\PoA}{\operatorname{PoA}}
\newcommand{\Q}{\mathbb Q}
\newcommand{\Ehat}{\widehat{\mathcal E}}

\title{An Exact Counterexample to Affine-Like\\
Price-of-Anarchy Shape in Quartic BPR Routing}
\author{Ian D'Ambrosio\\\small Nth Research Collective}
\date{9 August 2026}

\begin{document}
\maketitle

\begin{abstract}
We give an exact computer-assisted counterexample to a direct common-degree
quartic extension of the affine active-network shape theorem for
demand-dependent Price of Anarchy.  The instance is a directed network with
five vertices, six edges, and three origin--destination paths, all with
positive rational costs $c_e(x)=a_e+b_ex^4$.  Exact rational interval
certificates show that all three paths carry positive Wardrop flow throughout
the demand interval $[17,24]$, while
\[
 \PoA(21)>\PoA(17),\qquad \PoA(21)>\PoA(24).
\]
Continuity therefore forces an interior local maximum despite a constant
equilibrium active network.  Krawczyk inclusions isolate the Wardrop and
social-optimum KKT solutions, elementary boundary inequalities certify
constant support, and interval social costs certify both strict comparisons.
At differentiability points, we also derive a serial-edge identity explaining
how a common quartic edge can alter the derivative of PoA without changing
either route split.  The proof and certificate use exact rational arithmetic
for every interval and sign decision.
\end{abstract}

\section{Introduction}

Let a nonatomic population of demand $\mu>0$ choose routes from one origin to
one destination.  A Wardrop equilibrium uses only minimum-latency paths.  If
$C^{\rm eq}(\mu)$ and $C^{\rm opt}(\mu)$ are respectively the equilibrium and
minimum social costs, the Price of Anarchy is
\[
 \PoA(\mu)=\frac{C^{\rm eq}(\mu)}{C^{\rm opt}(\mu)}.
\]
Demand changes can alter both route flows and the set of shortest paths.
Cominetti, Dose, and Scarsini formalized the relevant breakpoints using the
\emph{active network} $\Ehat(\mu)$, the union of edges on all equilibrium
shortest paths, and proved a strong shape theorem for affine costs: between
consecutive changes of $\Ehat$, PoA is monotone or has one interior minimum
and no interior maximum \cite{CominettiDoseScarsini2024}.

That paper also gives a nonlinear counterexample with mixed degrees,
$c_1(x)=x$ and $c_2(x)=1+x^2$.  It does not answer whether the affine shape
survives for costs in the standard common-degree quartic BPR form.  This
distinction matters
because common-degree costs obey a scaling relation between equilibrium and
the social optimum \cite{OHareConnorsWatling2016}.  Dose's subsequent
two-link polynomial analysis concerns demands at which equilibrium is optimal
and explicitly leaves polynomial shape results and larger networks open
\cite{Dose2026}.  Related smoothness results require strictly positive edge
derivatives and therefore do not apply at configurations containing
zero-loaded BPR edges, since $c'_e(0)=0$
\cite{CominettiDoseScarsiniPhase2024}.

To our knowledge, the example below is the first exact counterexample to this
direct common-degree quartic extension.  The counterexample is small, rational,
and exactly checkable.  Its PoA bump is numerically shallow, but its strictness
does not depend on floating-point arithmetic.  No claim of topology minimality
or empirically calibrated coefficients is made.

\section{Model and optimality conditions}

For each directed edge $e$, let
\[
 c_e(t)=a_e+b_et^4,\qquad a_e,b_e\in\Q_{>0}.
\]
For path flows $f$ and induced edge loads $\ell_e(f)$, Wardrop equilibria are
the minimizers of the Beckmann potential
\cite{BeckmannMcGuireWinsten1956}
\[
 \Phi(f)=\sum_e\left(a_e\ell_e(f)+\frac{b_e}{5}\ell_e(f)^5\right)
\]
over the demand simplex.  Social optima minimize
\[
 C(f)=\sum_e\left(a_e\ell_e(f)+b_e\ell_e(f)^5\right).
\]
Thus their path KKT equations use the marginal edge costs
$a_e+5b_e\ell_e^4$.  Positivity of the $b_e$ makes both objectives strictly
convex in edge loads.  In the network below, the path--edge incidence matrix
has full column rank, so the path-flow minimizers are unique and continuous in
demand.

\section{The counterexample}

Consider the directed graph in Figure~\ref{fig:network}.  Edge 0 is common to
every route, and edges 1--5 form a Wheatstone network.  There are exactly three
simple origin--destination paths:
\[
 P_x=(0,1,2),\qquad P_y=(0,3,4),\qquad P_z=(0,1,5,4).
\]

\begin{figure}[ht]
\centering
\begin{tikzpicture}[
  node distance=18mm and 25mm,
  vertex/.style={circle,draw,inner sep=1.5pt,minimum size=7mm},
  edge/.style={-{Latex[length=2mm]},thick}
]
  \node[vertex] (s) {$s$};
  \node[vertex,right=of s] (a) {$a$};
  \node[vertex,above right=of a] (u) {$u$};
  \node[vertex,below right=of a] (v) {$v$};
  \node[vertex,right=32mm of a] (t) {$t$};
  \draw[edge] (s) -- node[above] {$0$} (a);
  \draw[edge] (a) -- node[above left] {$1$} (u);
  \draw[edge] (u) -- node[above right] {$2$} (t);
  \draw[edge] (a) -- node[below left] {$3$} (v);
  \draw[edge] (v) -- node[below right] {$4$} (t);
  \draw[edge] (u) -- node[right] {$5$} (v);
\end{tikzpicture}
\caption{The common-edge Wheatstone network.}
\label{fig:network}
\end{figure}

The coefficient pairs are given in Table~\ref{tab:coefficients}.

\begin{table}[ht]
\centering
\caption{Positive rational coefficients in $c_e(t)=a_e+b_et^4$.}
\label{tab:coefficients}
\begin{tabular}{@{}clrr@{}}
\toprule
$e$ & role & $a_e$ & $b_e$ \\
\midrule
0 & common serial & $12{,}800{,}000$ & $1855/374$ \\
1 & upper left   & $17/3$   & $13/896$ \\
2 & upper right  & $236$    & $38/5$ \\
3 & lower left   & $1280$   & $17/80$ \\
4 & lower right  & $5504/7$ & $188$ \\
5 & cross edge   & $424/7$  & $208/3$ \\
\bottomrule
\end{tabular}
\end{table}

\begin{theorem}\label{thm:counterexample}
For the network and costs above, all three paths are active at Wardrop
equilibrium for every $\mu\in[17,24]$.  Nevertheless, $\PoA$ has an interior
maximum in $(17,24)$.
\end{theorem}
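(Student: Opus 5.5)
We reduce the problem to a finite set of polynomial systems and certify each one with exact rational interval arithmetic. Write $x,y,z\ge0$ for the flows on $P_x,P_y,P_z$, with $x+y+z=\mu$. The edge loads are
\[
\ell_0=\mu,\quad \ell_1=x+z,\quad \ell_2=x,\quad \ell_3=y,\quad \ell_4=y+z,\quad \ell_5=z .
\]
Edge~0 is shared by every path, so its cost cancels from all path-cost differences. With all three paths used, the Wardrop conditions reduce to two quartic equations in $(x,z)$:
\[
c_1(x+z)+c_2(x)=c_3(\mu-x-z)+c_4(\mu-x)=c_1(x+z)+c_5(z)+c_4(\mu-x).
\]
The social-optimum conditions take the same form with $b_e$ replaced by $5b_e$. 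By the uniqueness and continuity noted in the model section, each system has at most one feasible solution for each $\mu$. The task is therefore to isolate these solutions rigorously.

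\textbf{Step 1 (constant support).} Constant support is equivalent to the interior Wardrop solution staying in the open simplex for every $\mu\in[17,24]$. We establish this with an elementary boundary argument. On each face $x=0$, $y=0$, or $z=0$, we restrict to the two-path subnetwork and find its equilibrium. We then show that the omitted path is strictly cheaper there. This rules out every boundary equilibrium, because the unused path would violate the Wardrop condition. By uniqueness, the equilibrium must be interior.

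The comparisons use monotonicity of each $c_e$ and are reduced to sign checks of polynomials in $\mu$ on $[17,24]$. We certify these by subdividing $[17,24]$ into rational subintervals and bounding each polynomial with exact rational interval arithmetic. As a backup, we can apply a parametric Krawczyk operator to the interior system over boxes $[\mu_i,\mu_{i+1}]\times X\times Z$. Such a box shows that the unique solution lies inside a box with $x,z>0$ and $\mu-x-z>0$. Covering $[17,24]$ by finitely many boxes then proves that all three paths are active throughout.

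\textbf{Step 2 (three point evaluations).} At $\mu\in\{17,21,24\}$ we compute rational enclosures of both the Wardrop and the optimal flows. We use a Krawczyk inclusion started from floating-point approximations. Verifying $K(X)\subset\operatorname{int}X$ in exact rational arithmetic proves that a unique zero exists in $X$.

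For the optimum we must also confirm which paths it uses. Here we check that the enclosed solution is interior, or, if some path is unused in the optimum, that the marginal-cost KKT inequality holds for that path. By strict convexity, a certified KKT point is the optimum.

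Substituting the flow enclosures into $C(f)$ gives interval enclosures of $C^{\rm eq}$ and $C^{\rm opt}$, and hence of $\PoA(\mu)$. We then verify that the lower bound for $\PoA(21)$ exceeds the upper bounds for $\PoA(17)$ and $\PoA(24)$.

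\textbf{Step 3 (conclusion).} Both $C^{\rm eq}$ and $C^{\rm opt}$ are continuous, and $C^{\rm opt}>0$, so $\PoA$ is continuous on $[17,24]$. It therefore attains its maximum on that interval. Since $\PoA(21)$ strictly exceeds both endpoint values, the maximum cannot occur at $17$ or $24$, so it lies in the open interval $(17,24)$. This completes the proof, given the certificates of Steps 1 and 2.

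\textbf{Main obstacle.} We expect the difficulty to be precision in Step 2. Edge~0 contributes the huge constant $12{,}800{,}000\mu$ to both costs, so $\PoA$ sits extremely close to $1$ and the bump is tiny. The cost enclosures must therefore be narrow enough to separate values whose difference is on the order of the bump.

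We plan to handle this in two ways. First, we subtract the common term $a_0\mu+b_0\mu^5$ analytically. Since it is identical in $C^{\rm eq}$ and $C^{\rm opt}$, we can compare $\PoA$ values through the cross-multiplied inequality
\[
C^{\rm eq}(21)\,C^{\rm opt}(17) > C^{\rm eq}(17)\,C^{\rm opt}(21),
\]
and similarly against $\mu=24$, with the edge-0 terms expanded exactly. Second, we tighten the Krawczyk boxes with many refinement iterations using high-precision rational endpoints.
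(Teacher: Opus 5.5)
Your proposal is correct. Steps~2 and~3 match the paper: Krawczyk enclosures at $\mu=17,21,24$, a two-path optimum at $21$ and $24$ certified by a strict marginal-cost inequality for the unused path, and exact cross-multiplied comparisons of interval social costs.

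The genuine difference is Step~1. The paper never locates the face equilibria. Its boundary lemma uses three crude exact inequalities to show that the \emph{equality system} \eqref{eq:F}--\eqref{eq:G} has no nonnegative solution with $xyz=0$ and $17\le\mu\le24$: on $x=0$ because $a_2<a_4+a_5$, on $y=0$ because it forces $\mu<11$, and on $z=0$ because the system is linear in $x^4,y^4$ and forces $\mu>30$. A first-exit argument then yields constant support; it uses Berge continuity of the equilibrium and the interior Krawczyk root at $\mu=17$ as an anchor.

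You instead exclude boundary \emph{equilibria} pointwise. The reduced two-path equilibrium on each face is the only candidate, and you show the omitted path is strictly cheaper there. This needs no anchor and no first-exit argument, and it does hold here. For instance, on $y=0$ it suffices that $z>0.41$ or so at the reduced equilibrium, and in fact $z\approx5$ at $\mu=17$.

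The cost is that each reduced equilibrium is an implicit function of $\mu$, so your \emph{sign checks of polynomials in $\mu$} are not immediate. You must enclose these branches on every subinterval. One way is to use that both reduced flows increase with $\mu$, since each two-path cost difference is strictly increasing in one path flow and strictly decreasing in the other.

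The two routes are closely linked. Along each reduced branch the omitted-path margin vanishes only at a boundary solution of \eqref{eq:F}--\eqref{eq:G}, so the paper's lemma implies that margin never vanishes on $[17,24]$. One sign evaluation per face plus continuity of the reduced branches would then give your face claims. Since your Step~3 needs continuity of the equilibrium anyway, the paper's route is the cheaper certificate. Your parametric-Krawczyk backup is also sound but heavier than either.
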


\subsection{Constant Wardrop support}

Let $(x,y,z)$ be the flows on $(P_x,P_y,P_z)$, with
$x+y+z=\mu$.  Excluding the common edge, the five loads are
\[
 (x+z,x,y,y+z,z).
\]
When all paths are used, equality of $P_x$ and $P_z$, and of $P_y$ and
$P_z$, is equivalent to
\begin{align}
 F_\mu(x,y)
 &=c_2(x)-c_5(z)-c_4(\mu-x)=0,\label{eq:F}\\
 G_\mu(x,y)
 &=c_3(y)-c_1(\mu-y)-c_5(z)=0,\label{eq:G}
\end{align}
where $z=\mu-x-y$.  Put
$\mathbf H_\mu=(F_\mu,G_\mu)$.  This polynomial map is continuously
differentiable.  For a rational box $X$, the verifier evaluates an interval
matrix $\mathbf H_\mu'(X)$ that contains every Jacobian
$\mathbf H_\mu'(w)$ for $w\in X$, and applies the two-dimensional Krawczyk
operator \cite{MooreKearfottCloud2009}
\[
 K(m,X)=m-C\mathbf H_\mu(m)
 +(I-C\mathbf H_\mu'(X))(X-m),
\]
where $m$ is the midpoint of $X$ and $C$ is the exact inverse of the midpoint
Jacobian.  The Krawczyk inclusion theorem says that strict inclusion
$K(m,X)\subset\operatorname{int}X$ proves a unique root in $X$.
At demand 17 it gives
\begin{align*}
11.745703474&<x<11.745703476,\\
 3.439049651&<y< 3.439049654,\\
 1.815246870&<z< 1.815246875.
\end{align*}
In particular, the unique equilibrium is in the simplex interior.

It remains to show that this branch cannot meet the boundary before demand
24.  The three possible boundary cases admit short exact exclusions.

\begin{lemma}\label{lem:boundary}
No solution of \eqref{eq:F}--\eqref{eq:G} with nonnegative path flows has
$17\leq\mu\leq24$ and $xyz=0$.
\end{lemma}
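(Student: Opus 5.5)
The plan is to split into the three boundary cases $x=0$, $y=0$, $z=0$. In each case I will use only the exact coefficients of Table~\ref{tab:coefficients}, the monotonicity of every $c_e$ on $[0,\infty)$, and the loads $(x+z,x,y,y+z,z)$ on edges $1$--$5$. Recall that $\mu-x=y+z$ is the load on edge~4 and $\mu-y=x+z$ is the load on edge~1, so \eqref{eq:F}--\eqref{eq:G} read $c_2(x)=c_5(z)+c_4(y+z)$ and $c_3(y)=c_1(x+z)+c_5(z)$.

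\emph{Case $x=0$.} Equation \eqref{eq:F} becomes $236=c_5(z)+c_4(\mu)$. The right-hand side is at least $a_4+a_5=5504/7+424/7>236$, which is immediately impossible.

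\emph{Case $y=0$.} Equation \eqref{eq:G} becomes $1280=c_1(\mu)+c_5(z)$. Since $\mu\ge17$, this gives $b_5z^4\le 1280-a_5-a_1-b_1\cdot17^4$. I will compute this rational bound exactly; it is roughly $1.9$, so it yields an explicit rational $z_0$ with $z\le z_0$ (about $0.41$). Then $x=\mu-z\ge 17-z_0$. Equation \eqref{eq:F} now reads $c_2(x)=c_5(z)+c_4(z)$, so
\[
c_2(17-z_0)\le c_5(z_0)+c_4(z_0).
\]
The left side exceeds $236+\tfrac{38}{5}\cdot16^4$, while the right side is below $2000$. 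This is a contradiction, and checking it takes one exact rational comparison.

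\emph{Case $z=0$.} This case is the main obstacle, because nothing is grossly violated. The equations become
\[
\phi_\mu(x)=c_2(x)-c_4(\mu-x)-a_5=0,\qquad \psi_\mu(x)=c_3(\mu-x)-c_1(x)-a_5=0,
\]
with $y=\mu-x$. The function $\phi_\mu$ is strictly increasing in $x$ and strictly decreasing in $\mu$. The function $\psi_\mu$ is strictly decreasing in $x$ and strictly increasing in $\mu$. So for each $\mu$ there are unique roots $x_F(\mu)$ and $x_G(\mu)$, both increasing in $\mu$, and the case is impossible if and only if $x_F(\mu)\neq x_G(\mu)$ on $[17,24]$. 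At $\mu=17$ the interior solution has $z>0$, which suggests a fixed sign for $x_F-x_G$, presumably $x_G<x_F$ so that $P_z$ is strictly cheaper when $z=0$.

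To certify this sign I will subdivide $[17,24]$ into rational subintervals $[\mu_i,\mu_{i+1}]$. On each one I exhibit a rational $r_i$ with
\[
\psi_{\mu_{i+1}}(r_i)<0<\phi_{\mu_i}(r_i).
\]
By the monotonicity in $\mu$, for every $\mu$ in the subinterval we then get $\phi_\mu(r_i)>0$ and $\psi_\mu(r_i)<0$, hence $x_F(\mu)<r_i<x_G(\mu)$, or the reverse with all inequalities flipped if the sign goes the other way. Each check is an exact polynomial evaluation in $\Q$. A modest number of subintervals should suffice, since the gap between the roots is expected to stay bounded away from zero across the interval. The work in this step is choosing the $r_i$ and the mesh, which I would do by first locating $x_F$ and $x_G$ numerically at the grid points and then rounding to nearby rationals. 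The three cases together prove Lemma~\ref{lem:boundary}.
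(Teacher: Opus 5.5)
Your $x=0$ and $y=0$ cases are correct. In the $y=0$ case you use $\mu\ge17$ to force $z\le z_0\approx0.41$, and then \eqref{eq:F} is grossly violated. The paper instead shows $z<3$, hence $x<8$ and $\mu<11$; either route works.

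\textbf{The gap is in the $z=0$ case: your certificate, as written, proves nothing.}
\begin{itemize}
\item Since $\phi_\mu$ increases and $\psi_\mu$ decreases in $x$, the inequality $\phi_\mu(r)>0$ means $r>x_F(\mu)$. Likewise $\psi_\mu(r)<0$ means $r>x_G(\mu)$.
\item So your displayed condition only places $r_i$ to the right of \emph{both} roots. Any sufficiently large $r_i$ satisfies it, and it does not separate the roots. The flipped version has the same defect.
\item Separately, $\phi_{\mu_i}(r_i)>0$ does not carry over to $\mu>\mu_i$, because $\phi_\mu$ decreases in $\mu$.
\end{itemize}
The valid certificate for $x_F(\mu)<r_i<x_G(\mu)$ on $[\mu_i,\mu_{i+1}]$ is $\phi_{\mu_{i+1}}(r_i)>0$ and $\psi_{\mu_i}(r_i)>0$. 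Note that $\phi,\psi>0$ means $P_z$ is strictly cheapest. That corresponds to $x_F<x_G$, not to the ordering $x_G<x_F$ you guessed.

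Once repaired, no mesh is needed. Take $r=17$: then $\phi_{24}(17)=c_2(17)-c_4(7)-a_5>0$ and $\psi_{17}(17)=1280-c_1(17)-a_5\approx1.96>0$. Hence, for every $\mu\in[17,24]$, $\psi_\mu>0$ on $[0,17]$ and $\phi_\mu>0$ on $[17,\mu]$, so no $x$ is a common root.

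You also missed why this case is easy. With $z=0$ and written in $(x,y)$, both equations are free of $\mu$. They are linear in $X=x^4$ and $Y=y^4$:
\[
b_2X-b_4Y=a_4+a_5-a_2, \qquad b_3Y-b_1X=a_1+a_5-a_3.
\]
The paper solves this nonsingular system exactly and gets $X>21^4$ and $Y>9^4$. So the only boundary solution with $z=0$ has $\mu=x+y>30$, and no subdivision in $\mu$ is needed. Parametrizing by $\mu$ with $y=\mu-x$ hid this decoupling, which is why the case looked like the main obstacle.
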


\begin{proof}
If $x=0$, then
\[
 F_\mu(0,y)\leq a_2-a_5-a_4=-\frac{4276}{7}<0,
\]
so \eqref{eq:F} cannot hold.

If $y=0$, equation \eqref{eq:G} forces $z<3$: for $z\geq3$,
$c_1(x+z)+c_5(z)>c_3(0)$.  Equation \eqref{eq:F} then reads
\[
 b_2x^4=a_5+a_4-a_2+(b_5+b_4)z^4.
\]
Direct substitution at $x=8,z=3$ shows that its left side is already larger
than the right side.  Hence $x<8$ and $\mu=x+z<11$.

If $z=0$, equations \eqref{eq:F}--\eqref{eq:G} are linear in
$X=x^4$ and $Y=y^4$.  Their unique solution is
\[
 X=\frac{3831364840}{18693}>21^4,
 \qquad
 Y=\frac{1083769565}{130851}>9^4.
\]
Thus $\mu=x+y>30$.
\end{proof}

\begin{lemma}[Support continuation]\label{lem:continuation}
The unique Wardrop path flow is continuous on $[17,24]$.  Starting from the
interior root at demand 17, it cannot leave the simplex interior before demand
24.
\end{lemma}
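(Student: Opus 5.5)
The plan is a connectedness argument: continuity of the equilibrium path, plus the fact that the equilibrium can never sit on the boundary of the simplex for $\mu\in[17,24]$ (Lemma~\ref{lem:boundary}).

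\emph{Continuity.} Work in edge loads. The Beckmann potential $\Phi$ is strictly convex in the loads because every $b_e>0$. The path--edge incidence matrix has full column rank, so $\Phi$ is strictly convex as a function of the path flows $(x,y,z)$. The feasible sets are the simplices $\{x+y+z=\mu,\ x,y,z\ge0\}$; this is a continuous, compact-valued, convex-valued correspondence in $\mu$. Berge's maximum theorem makes the argmin correspondence upper hemicontinuous, and since it is single-valued, the minimizer $f(\mu)=(x(\mu),y(\mu),z(\mu))$ is continuous on $[17,24]$. The same conclusion also follows directly: take a sequence $\mu_k\to\mu$, extract a convergent subsequence of minimizers, and pass to the limit in the variational inequality; uniqueness of the limit then gives convergence of the whole sequence.

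\emph{Key relation between support and the system.} A Wardrop flow with all three paths used satisfies \eqref{eq:F}--\eqref{eq:G}, since the common edge 0 cancels from both path-cost differences. I also need the reverse direction at boundary points. Suppose $f(\mu^*)$ is a limit of interior equilibria $f(\mu_k)$. Each $f(\mu_k)$ satisfies $F_{\mu_k}=G_{\mu_k}=0$. Because $F,G$ are polynomial in $(\mu,x,y)$, the limit point also satisfies $F_{\mu^*}=G_{\mu^*}=0$, and it has nonnegative path flows.

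\emph{Continuation argument.} Let
\[
S=\{\mu\in[17,24]: x(\mu),y(\mu),z(\mu)>0\}.
\]
The Krawczyk certificate at $\mu=17$ shows $17\in S$. Continuity of $f$ shows $S$ is relatively open. Let $\mu^*=\sup\{\nu:[17,\nu]\subset S\}$, and suppose that either $\mu^*\notin S$ or $\mu^*<24$. If $\mu^*\notin S$, then by continuity $f(\mu^*)$ is a limit of interior equilibria. By the previous step it solves \eqref{eq:F}--\eqref{eq:G} with nonnegative flows and $xyz=0$. This contradicts Lemma~\ref{lem:boundary}, because $17\le\mu^*\le24$. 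Hence $\mu^*\in S$, and openness of $S$ forces $\mu^*=24$. Therefore $S=[17,24]$.

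\emph{Main obstacle.} The delicate step is transferring the equal-cost equations to the boundary point. It is tempting to argue from the equilibrium conditions at $\mu^*$ itself, but at a boundary point those give only inequalities for the unused paths. The closedness argument avoids this by using the equations along the interior sequence and passing to the limit, which is why the plan relies on continuity of $f$ rather than on the KKT conditions at $\mu^*$.
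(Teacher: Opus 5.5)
Your proposal is correct and follows essentially the same route as the paper. Both get continuity of the unique Beckmann minimizer from Berge's maximum theorem: the paper rescales $f=\mu q$ to a fixed unit simplex, while you use the continuous correspondence $\mu\mapsto\mu\Delta$ directly. Both then run a first-exit argument that passes \eqref{eq:F}--\eqref{eq:G} to the limit along interior demands and contradicts Lemma~\ref{lem:boundary}; your supremum formulation simply makes explicit the existence of the paper's first contact point.
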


\begin{proof}
Write a feasible path flow as $f=\mu q$, where $q$ belongs to the fixed compact
unit simplex.  The Beckmann objective is continuous in $(\mu,q)$, and strict
convexity together with the full-column-rank incidence matrix makes its
minimizer unique.  Berge's maximum theorem therefore makes $q(\mu)$, and hence
$f(\mu)$, continuous \cite{Berge1963}.

If a coordinate first reached zero at some $\mu_*\in(17,24]$, choose demands
$\mu_n\uparrow\mu_*$ before that first contact.  All three paths are used at
$\mu_n$, so \eqref{eq:F}--\eqref{eq:G} hold there.  Continuity of the path
flows and costs preserves both equalities in the limit at $\mu_*$.  This would
give a nonnegative boundary solution forbidden by
Lemma~\ref{lem:boundary}.
\end{proof}

All three path flows are consequently positive throughout $[17,24]$, so all
three simple paths are shortest.  In the active-network convention,
$\Ehat(\mu)$ is the full six-edge graph throughout this interval.

\subsection{Exact PoA comparisons}

The certificate isolates the equilibrium KKT roots at demands 17, 21, and 24.
For the social optimum, it isolates an interior marginal-cost KKT root at 17.
At 21 and 24 the optimum uses $P_x$ and $P_y$ only; exact univariate brackets
for the $P_x$ flow are
\[
14.499119464<u_{21}<14.499119467,
\qquad
16.569846371<u_{24}<16.569846374.
\]
The marginal cost of the unused path $P_z$ is bounded strictly above the
common used-path marginal cost in both boxes.

Substituting these rational flow boxes into
\[
 C=\sum_e(a_e\ell_e+b_e\ell_e^5)
\]
gives rational intervals for all six social costs.  Dividing outward yields
the certified enclosures in Table~\ref{tab:poa}.  The displayed decimals are
coarser outward roundings of the exact rational endpoints.

\begin{table}[ht]
\centering
\caption{Exact-rational interval enclosures for PoA.}
\label{tab:poa}
\begin{tabular}{@{}cc@{}}
\toprule
$\mu$ & certified enclosure \\
\midrule
17 & $[1.0000089877,\;1.0000089879]$ \\
21 & $[1.0000100959,\;1.0000100961]$ \\
24 & $[1.0000088727,\;1.0000088730]$ \\
\bottomrule
\end{tabular}
\end{table}

The exact checker does not compare rounded decimals.  It verifies the positive
cross products
\begin{align*}
 \underline C^{\rm eq}_{21}\,\underline C^{\rm opt}_{17}
 -\overline C^{\rm eq}_{17}\,\overline C^{\rm opt}_{21}&>0,\\
 \underline C^{\rm eq}_{21}\,\underline C^{\rm opt}_{24}
 -\overline C^{\rm eq}_{24}\,\overline C^{\rm opt}_{21}&>0.
\end{align*}
Hence $\PoA(21)$ strictly exceeds the values at both endpoints.  PoA is
continuous, so its maximum on $[17,24]$ is attained in $(17,24)$.  This proves
Theorem~\ref{thm:counterexample}.

\section{A serial-edge mechanism}

The counterexample was found by exploiting an exact cancellation.  Let $H$ be
a base network, and fix an open demand interval $I$ on which its equilibrium
cost $\lambda$ and its optimum social cost $O$ are differentiable.  Write the
equilibrium social cost as $E(\mu)=\mu\lambda(\mu)$ and the optimum marginal
cost as $M(\mu)=O'(\mu)$.  Add one edge common to every path, with latency
$A+B\mu^4$.  Its social-cost contribution is
\[
 G(\mu)=A\mu+B\mu^5.
\]
Because every feasible route crosses the new edge, neither the equilibrium nor
the optimum route split changes.

\begin{proposition}\label{prop:mechanism}
For every $\mu\in I$, let $N_0=E'O-EO'$ be the numerator of the derivative of
$E/O$.  After adding the common edge, the new derivative numerator is
\[
 N=N_0+A\alpha+B\beta,
\]
where
\[
 \alpha=\mu^2\lambda'+O-\mu M,
 \qquad
 \beta=\mu^4\bigl(\alpha+4(O-\mu\lambda)\bigr).
\]
\end{proposition}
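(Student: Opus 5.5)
The plan is a direct quotient-rule computation, organised around how the common edge shifts the equilibrium and optimum costs. First I will identify the new costs. Because every feasible route crosses the added edge, its load is always $\mu$. As noted just before the proposition, neither route split changes. Hence every equilibrium user pays $\lambda(\mu)+A+B\mu^4$, so the new equilibrium social cost is $E+G$. The new optimum cost is $O+G$: the added edge contributes the fixed amount $G(\mu)=A\mu+B\mu^5$ to every feasible flow, so the base minimiser stays optimal. Both new costs are differentiable on $I$, because $E=\mu\lambda$, $O$ and $G$ are.

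Next I will expand the new numerator:
\[
N=(E'+G')(O+G)-(E+G)(O'+G').
\]
The $G'G$ terms cancel, which leaves
\[
N=N_0+G\,(E'-O')+G'\,(O-E).
\]
I then substitute $E=\mu\lambda$, $E'=\lambda+\mu\lambda'$, $O'=M$, $G=A\mu+B\mu^5$ and $G'=A+5B\mu^4$.

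Collecting the coefficient of $A$ gives
\[
\mu(\lambda+\mu\lambda'-M)+(O-\mu\lambda)=\mu^2\lambda'+O-\mu M=\alpha.
\]
Collecting the coefficient of $B$ gives
\[
\mu^5(\lambda+\mu\lambda'-M)+5\mu^4(O-\mu\lambda).
\]
Factoring out $\mu^4$, the bracket becomes
\[
\mu^2\lambda'-\mu M+5O-4\mu\lambda=(\alpha-O)+5O-4\mu\lambda=\alpha+4(O-\mu\lambda),
\]
which matches $\beta$.

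The main point needing care is not the algebra but the claim that the optimum cost shifts by exactly $G$, so that $O'$ is still the base marginal cost $M$. I would justify this by noting that the added term is constant on the feasible set at fixed demand, so the minimiser and its value shift rigidly. Together with the differentiability assumptions on $I$, this makes the quotient rule legitimate pointwise.
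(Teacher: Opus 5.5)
Your proof is correct and follows the paper's argument. You expand $(E'+G')(O+G)-(E+G)(O'+G')$, cancel the $GG'$ terms, and collect the coefficients of $A$ and $B$ using $E=\mu\lambda$ and $O'=M$; your check that the new costs are exactly $E+G$ and $O+G$ just spells out what the paper asserts immediately before the proposition.
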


\begin{proof}
Expand
\[
 (E'+G')(O+G)-(E+G)(O'+G').
\]
The quadratic terms $G'G-GG'$ cancel.  Collecting the coefficients of $A$
and $B$, then using $E=\mu\lambda$, gives the stated formulas.
\end{proof}

At two differentiability points, requiring $N>0$ at the first and $N<0$ at the
second is therefore a two-variable linear-feasibility problem in positive
$(A,B)$.  We used this observation as a numerical discovery heuristic to tune
a Wheatstone base and common edge; the discovery trajectory is not part of the
proof.  The common edge leaves its Braess-type path migration untouched while
reweighting the derivative through $A$ and $B$.  This explains why untargeted
coefficient sampling can miss the effect: the route split and the PoA
normalization must be tuned separately.

\section{Reproducibility and scope}

The archival supplement contains the paper source, canonical certificate,
standalone exact checker, SHA-256 manifest, and prior-art record, following
standard reproducibility practice \cite{StoddenEtAl2016}.  The exact version
1.0.0 archive is permanently available at
\href{https://doi.org/10.5281/zenodo.21864490}{doi:10.5281/zenodo.21864490}.
After extracting the supplement, run
\begin{verbatim}
python3 verify_audit.py
\end{verbatim}
This one command checks every manifest entry, runs
\path{ancillary/verify_counterexample.py} in a fresh subprocess, and compares
its output byte-for-byte with \path{ancillary/certificate.json}.  The checker
uses Python's \texttt{Fraction} type for every interval and sign decision;
floating-point values appear only as convenience fields after the
corresponding exact signs have been checked.  The certificate's expected
SHA-256 digest is
\begin{center}
\small\ttfamily
1234e8feb90a19dd7dfd29f82cf92af5cd32ec4141e82d1a4f4d4843c05b7eb6.
\end{center}

Theorem~\ref{thm:counterexample} disproves the direct common-degree quartic
extension described in the introduction.  It does not determine the shape for
parallel or series-parallel quartic networks, bound the possible number of
extrema, or establish that the topology or coefficient sizes are minimal.  The
large engineered common free-flow term makes the certified PoA variation
small; it does not affect the exact sign comparisons.

\paragraph{Research-system disclosure.}
This research was conducted by Ian D'Ambrosio with assistance from Beyond, an
AI-assisted research system operated by Nth Research Collective.  Beyond
supported literature retrieval, hypothesis generation, proof critique, code
and experiment design, orchestration, adversarial checking, and drafting.
Ian D'Ambrosio made all final scientific judgments, verified the cited sources
and computational evidence, and accepts full responsibility for the accuracy,
integrity, originality, and conclusions.  Beyond is not an author and cannot
approve the manuscript or accept responsibility for the work.

\begingroup
\small
\bibliographystyle{plain}
\bibliography{references}
\endgroup

\end{document}